\def\BibTeX{{\rm B\kern-.05em{\sc i\kern-.025em b}\kern-.08em
    T\kern-.1667em\lower.7ex\hbox{E}\kern-.125emX}}
\DeclarePairedDelimiter\floor{\lfloor}{\rfloor}
\newtheorem{theorem}{Theorem}
\newtheorem{proposition}{Proposition}
\newcommand{\mybinom}[3][0.8]{\scalebox{#1}{$\dbinom{#2}{#3}$}}
\xdef\f@size@small{\f@size}
\xdef\f@baselineskip@small{\f@baselineskip}
\xdef\f@size@normalsize{\f@size}
\xdef\f@baselineskip@normalsize{\f@baselineskip}
\newcommand{\smalltonormalsize}{%
  \fontsize
    {\fpeval{(\f@size@small+\f@size@normalsize)/2}}
    {\fpeval{(\f@baselineskip@small+\f@baselineskip@normalsize)/2}}%
  \selectfont
}
\xdef\f@size@footnotesize{\f@size}
\xdef\f@baselineskip@footnotesize{\f@baselineskip}
\xdef\f@size@small{\f@size}
\xdef\f@baselineskip@small{\f@baselineskip}
\newcommand{\footnotesizetosmall}{%
  \fontsize
    {\fpeval{(\f@size@footnotesize+\f@size@small)/2}}
    {\fpeval{(\f@baselineskip@footnotesize+\f@baselineskip@small)/2}}%
  \selectfont
}
\begin{document}

\title{Meta Distribution of Passive Electromagnetic Field Exposure in Cellular Networks\\

\thanks{This work was supported by Innoviris under the Stochastic Geometry Modeling of Public Exposure to EMF (STOEMP-EMF) grant.}
}

\author{Quentin~Gontier$^{* \dagger}$, Charles~Wiame$^{\ddagger}$,
François~Horlin$^{*}$, \\
Christo~Tsigros$^{\ddagger}$, Claude~Oestges$^{\mathsection}$,
Philippe~De~Doncker$^{*}$\\
$^{*}$ Universit\'e Libre de Bruxelles, Brussels, Belgium, \\
Email: \{quentin.gontier, francois.horlin, philippe.dedoncker\}@ulb.be\\
$^{\dagger}$ Brussels Environnement,  Brussels, Belgium, Email: \{qgontier, ctsigros\}@environnement.brussels\\
$^{\ddagger}$ Massachusetts Institute of Technology, Cambridge, USA, Email: wiame@mit.edu\\
$^{\mathsection}$ Universit\'e Catholique de Louvain, Louvain-la-Neuve, Belgium, Email: claude.oestges@uclouvain.be\\}


\maketitle

\begin{abstract}
This paper focuses on the meta distribution of electromagnetic field exposure (EMFE) experienced by a passive user in a cellular network implementing dynamic beamforming. The meta distribution serves as a valuable tool for extracting fine-grained insights into statistics of individual passive user EMFE across the network. A comprehensive stochastic geometry framework is established for this analysis. Given the pivotal role of accurately modeling the main and side lobes of antennas in this context, a multi-cosine gain model is introduced. The meta distribution is closely approximated by a beta distribution derived from its first- and second-order moments, which is demonstrated to be mathematically tractable. The impact of the number of antennas in the ULA on the meta distribution is explored, shedding light on its sensitivity to this parameter.
\end{abstract}

\begin{IEEEkeywords}
Dynamic beamforming, EMF exposure, meta distribution, Nakagami-$m$ fading, stochastic geometry.
\end{IEEEkeywords}

\section{Introduction}
\subsection{Motivation}
The rapid evolution of wireless communication technologies has prompted increasing concerns regarding potential risks associated with electromagnetic field exposure (EMFE) from wireless network infrastructures. Among the effects associated with non-ionizing frequencies, thermal effects are prominently recognized within the scientific community \cite{rumney19}. Organizations like the International Commission on Non-Ionizing Radiation Protection (ICNIRP) recommends limits on EMFE in terms of specific absorption rate or incident power density (IPD), based on conservative margins and reviews of the literature \cite{icnirp2020}. Subsequently, national or regional entities may adopt these guidelines or enact more stringent legislation.

A significant innovation introduced by 5G and beyond 5G networks is dynamic beamforming (DBF). By building arrays of antennas at the base station (BS), DBF facilitates the formation of narrow beams to mitigate interference. It results in a higher mean EMFE for \textit{active users} calling the beam, compared to \textit{passive users} (PUs) who are inactive in the network and typically experience lower EMFE and shorter exposure times, on average \cite{ANFR2019a,Chiaraviglio21}.

To evaluate the EMFE experienced by a random PU in the network, the commonly analyzed metric is the cumulative distribution function (CDF) of EMFE, denoted as $F_{emf}(T_e) = \mathbb P\left[\mathcal{S} < T_e\right]$, where $\mathcal{S}$ is the IPD and $T_e$ represents the EMFE upper threshold, which is expressed in W/m$^2$ within this paper, though it can be effortlessly converted to V/m \cite{GontierAccess}. This metric can be computed using stochastic geometry (SG), with the locations of BSs modeled as point processes (PPs) \cite{GontierAccess, GontierTWC}. 

Three sources of randomness contribute to the received power from BSs. First, the path loss attenuation depending on the PU's location in the network or, equivalently, on the locations of BSs corresponding to a single realization of the PP. Second, time fading in the propagation channel. Third, changes in the orientation of the beams. The first source is linked to spatial aspects of the SG model, while the latter two are tied to small-scale time-varying fluctuations. Although the CDF of EMFE indicates whether EMFE is below $T_e$, it fails to differentiate between temporal and spatial effects, thus offering limited insight into individual EMFE and neglecting potential increases in EMFE due to DBF for shorter periods. To isolate the temporal aspects from the spatial aspects, the CDF of EMFE conditioned on one PP realization $\Psi$ is introduced, and denoted as $F_{emf}(T_e|\Psi) = \mathbb P\left[\mathcal{S} < T_e|\Psi\right]$. For $s \in [0, 1]$, the meta distribution of EMFE is then defined as
\begin{equation}\label{eq:meta_def}
\mathcal{F}_{F{emf}}(T_{e}, s) = \mathbb P_\Psi \left[F_{emf}(T_e|\Psi) > s\right].
\end{equation}
Here, $\mathbb P_\Psi[\cdot]$ signifies that the probability operator is taken over all spatial realizations of the PP. When considering a large network, selecting a new realization of the PP around the centric PU at (0,0) is akin to observing the realization of the PP from the point of view of a distant PU in the network. Consequently, employing $\mathbb{P}_{\Psi}$ essentially evaluates performance for numerous users uniformly distributed across the network. In terms of interpretation, $\mathcal{F}_{F{emf}}(T_{e}, s) = x$\% means that $x$\% of PUs in the network experience an EMFE below the threshold $T_e$ for at least a fraction $s$ of the time. 

\subsection{Related Works}



\paragraph{DBF Models in SG}
Antenna pattern models aim to approximate the theoretical distribution of the antenna factor of a Uniform Linear Array (ULA), which is mathematically intractable. The widely used flat-top pattern, despite its lack of realism, maintains mathematical tractability in SG by assuming a flat gain for main lobes and lower flat gain for side lobes \cite{direnzo15}. An alternative, the cosine approximation, closely matches theoretical distributions, but does not model side lobes \cite{yu2017}. Finally, the Gaussian approximation accurately models the main lobe and maintains a flat gain for the side lobes \cite{Thornburg15}. 

\paragraph{EMFE Assessment Using SG}

In the recent years, SG has been crucial in evaluating IPD to optimize Wireless Power Transfer (WPT) or minimize it for EMF-aware applications. BSs are commonly modeled as homogeneous Poisson PP (PPPs) in SG models, facilitating computational tractability while maintaining accuracy. The flat-top gain model is widely used in WPT systems for energy harvesting \cite{Khan16} and energy correlation analysis \cite{ECC}. Furthermore, the Gaussian model in \cite{Wang21} examines the effects of imperfect beam alignment in millimeter wave (mmWave) WPT systems. Initial EMFE assessments in the SG framework are detailed in \cite{app10238753}, using an empirical model for 5G mmWave scenarios, and in \cite{GontierAccess}, comparing theoretical and experimental EMFE distributions in urban areas. The flat-top pattern is applied for EMFE analysis in coexisting sub-6~GHz and mmWave networks in \cite{9511258}, as well as a for a joint EMFE and SINR analysis in $\beta$-Ginibre PPs and inhomogeneous PPPs in \cite{GontierTWC}. 

\paragraph{Meta Distributions in SG}
The concept of conditional success probability, introduced in \cite{Ganti10}, isolates the user orientation's influence in a Poisson bipolar network by using the complementary cumulative distribution function of SINR conditioned on a PP realization. In \cite{Haenggi16}, the meta distribution of SINR is formally defined for the same PP, providing both an exact mathematical derivation and an approximation using a beta distribution. Extensive research, such as \cite{uplink_meta, Quan22}, applies the meta-distribution for coverage analysis, computing mean local delay and investigating the impact of DBF with beam misalignment, employing the flat-top model. The SINR meta distribution is also used for spatio-temporal analysis in mmWave heterogeneous networks with varying queue status~\cite{Yang22}.

\subsection{Contributions}

A notable gap exists in the literature concerning the study of EMFE experienced by a random PU in a DBF network, particularly within the SG framework. To the best of the authors' knowledge, no prior work has simultaneously studied the spatial and temporal fluctuations of EMFE. Moreover, all models used in the SG framework only consider one antenna array per BS, covering 360$^\circ{}$ and struggle to accurately capture the characteristics of the side lobes, which affects the CDF of EMFE of the PU. Motivated by these considerations, the contributions of this paper are as follows.
\begin{itemize}
    \item We introduce a novel performance metric for EMFE assessment in cellular networks, the meta distribution of EMFE. This metric is statistically finer than existing ones, as it takes into account both spatial and temporal aspects. In addition, we provide a mathematical expression for closely approximating this metric in the case of a PU in a network with DBF and Nakagami-$m$ fading.
    \item An accurate description of the gain pattern of BSs featuring 3 identical colocated ULAs, each serving a different sector, is needed to obtain a realistic description of the meta distribution of EMFE in a network employing DBF. Therefore, we introduce in this paper a novel antenna gain model named multi-cosine model.
\end{itemize}


\section{System Model}
\label{sec:system_model}

\subsection{Topology}
\label{ssec:topology}
Let the two-dimensional spatial domain $\mathcal{B} \in \mathbb{R}^2$ be the network area, defined as a disk with a radius $\tau$ and centered at the origin. Within $\mathcal{B}$, let $\Psi = {X_i}$ denote the PPP representing the locations of BS $X_i$, all sharing the same technology, belonging to the same network provider, operating at a carrier frequency $f$, and being able to transmit at a maximum power $P_t$. The density of BSs in $\Psi$ is denoted by $\lambda$. Each BS is situated at a height $z > 0$ relative to the users. The PU is positioned at the origin. As shown in Fig.~\ref{fig:network_Globecom}, the boresight direction of $X_i$ forms an angle $\xi_i$ with the PU. Each BS is equipped with three identical ULAs oriented at 120$^\circ$ intervals as per 3GPP specifications, comprising $N$ antennas with half-wavelength spacing. Consequently, the angle $\xi_i$ is a random variable in the interval $[0,2\pi/3[$. Intra-cell interference is neglected for simplicity, and an exclusion radius $r_e$ around the user ensures no BS is located within this region. The normalized gain $G(\xi)$ is uniformly scaled by the maximum gain $G_{\textrm{\normalfont max}} = N$, and defined in Subsection~\ref{ssec:gain}. For a conservative approach, the network is considered fully loaded, with each ULA communicating with one user.

\begin{figure}
    \centering
    \includegraphics[width=0.8\linewidth, trim={1cm, 4.5cm, 1cm, 1cm}, clip]{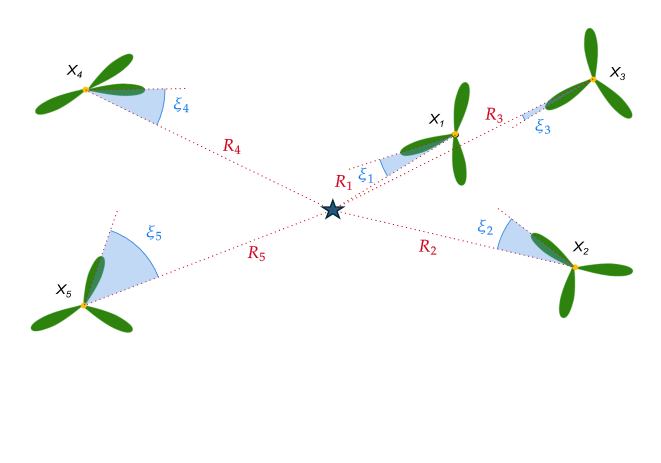}
    \caption{Network topology. The PU's location is represented by the star, the BSs are represented with the three ULA gains.}
    \label{fig:network_Globecom}
\end{figure}

\subsection{Propagation Model}
\label{ssec:propagation}

The propagation model is defined as
{\smalltonormalsize
\begin{equation}\label{eq:model}
    P_{r, i} = P_t G_i |h_i|^2 l_i 
\end{equation}}
where $P_{r, i}$ is the received power from BS $X_i$, $G_i$ is the BS gain towards the PU, $|h_i|^2$ accounts for the fading and $l_i$ is the path loss attenuation. Specifically, $l_i = l(X_i) = \kappa^{-1} \left(R_i^2+z^2\right)^{-\alpha/2}$ for a distance $R_i$ between $X_i$ and the PU, with $\alpha > 2$ the path loss exponent and $\kappa = (4\pi f/c_0)^2$ where $c_0$ is the speed of light. The channel $h_i$ follows a Nakagami-$m$ fading model, making $|h_i|^2$ gamma-distributed with shape parameter $m$ and scale parameter $1/m$. Consequently, the CDF of $|h_i|^2$ is expressed as $F_{|h|^2}(x) = \gamma(m,m x)/\Gamma(m)$. Utilizing the compact notation $\Bar{S}_{r}(r_i) \!= \!P_t l_i(r_i)\kappa/(4\pi)$, the PU EMFE is defined as $\mathcal{S}\!=\!\sum_{i \in \Psi} \Bar{S}_{r}(r_i)\,G_{i}\, |h_i|^2$. 

\subsection{Antenna Pattern Model}
\label{ssec:gain}
The normalized gain of one ULA with $N$ omnidirectional antenna elements and half-wavelength spacing is given~by
{\smalltonormalsize
\begin{equation}
    G_{act}(\varphi) = \frac{\sin^2\left(\frac{\pi\,N}{2}\,\sin(\varphi)\right)}{N^2\,\sin^2\left(\frac{\pi}{2}\,\sin(\varphi)\right)}
\end{equation}} where $\varphi \in [-\pi/3, \pi/3[$ and $\varphi = 0$ corresponds to the maximal gain of the main lobe $G_{act}(0) = 1$. This function yields intractable mathematical expressions when calculating performance metrics. We introduce the multi-cosine antenna pattern defined as
{\smalltonormalsize
\begin{equation}\label{eq:Gmc}G_{mc}(\varphi) = \left\{
    \begin{array}{ll}
    \!\cos^2\!\left(\frac{N \pi \varphi}{4}\right)  \!   & \text{if  } |\varphi|\leq 2/N\\
    \!\chi_k \sin^2\!\left(\frac{N \pi \varphi}{2}\right)  \!   & \text{if  } \frac{2 k}{N}\!\leq\!|\varphi|\!\leq \!\frac{2 k+2}{N}
    \end{array}\right.
\end{equation}} where $\chi_k = \frac{\sin^2(N x_k)}{N^2 \sin^2(x_k)}$ is the extrema of the $k$th side lobe of the theoretical gain function, with $0 \leq k \leq k_{\textrm{\normalfont max}}$ and $\chi_0\! = \!1$. The choice of $k_{\textrm{\normalfont max}}$ is flexible but should remain below $\floor*{N \sqrt{3}/4-1}$ to prevent side lobes from extending beyond each ULA's sector. The values of $x_k$ are well approached by the ordered positive solutions of $N \tan(x) = \tan(N x)$.
\begin{figure}
    \centering
    \includegraphics[width=0.8\linewidth, trim={3cm, 9cm, 3cm, 9.5cm}, clip]{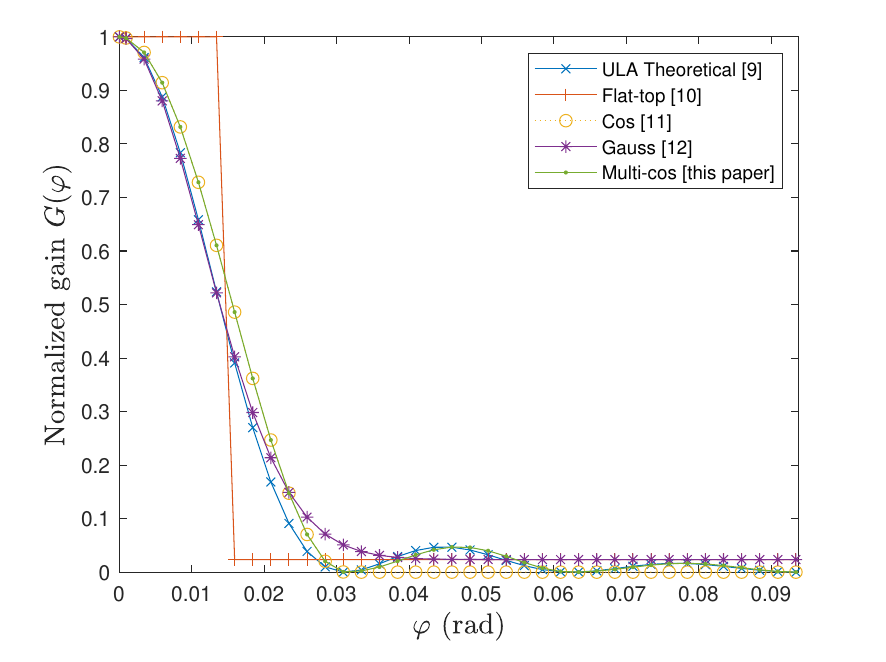}
    \caption{Theoretical ULA antenna gain and its approximations}
    \label{fig:gain_Globecom}
\end{figure}

\section{Mathematical Results}
\label{sec:math}
Subsection~\ref{ssec:cdf_EMFE} introduces the CDF of EMFE of a PU, conditioned on a realization of the PPP, and its associated characteristic function (CF). Subsection~\ref{ssec:meta} defines the meta distribution, and then presents an approximation using a beta distribution obtained through moment-matching, and provides mathematical expressions for the first- and second-order moments of the conditional EMFE CDF.

\subsection{CDF of EMFE Conditioned on $\Psi$}
\label{ssec:cdf_EMFE}

\begin{theorem}\label{th:CDF_cond}
    The CDF of EMFE conditioned on the PPP $\Psi$ for the propagation model in \eqref{eq:model} and the antenna gain in \eqref{eq:Gmc} can be written by
{\smalltonormalsize
\begin{equation}\label{eq:CDF_cond}
    F_{emf}\left(T_{e}|\Psi\right) = \frac{1}{2}-\frac{1}{\pi}\int_{0}^{\infty} \textrm{\normalfont Im}\left[\phi_{E}(q|\Psi)\,e^{-jqT_{e}}\right]\,q^{-1}\,dq
\end{equation}}
where 
{\smalltonormalsize
\begin{equation}\label{eq:Phi_E_def}
    \phi_{E}(q|\Psi) = \prod_{i \in \Psi}\ \! \left(\!1+\frac{6}{N\pi} \!\Bigg(\sum_{k = 0}^{k_{\textrm{\normalfont max}}}\!\frac{\zeta_k(r_i, q)}{\pi}\! -(1+k_{\textrm{\normalfont max}})\Bigg)\right)
\end{equation}} is the conditional CF of the EMFE of the PU and
{\smalltonormalsize
\begin{equation}\label{eq:zeta_final}
    \zeta_k(r_i, q) =  \frac{\pi\sum_{l = 0}^{m-1} \mybinom[0.7]{-1/2}{l}\mybinom[0.7]{m-1}{l}\left(j\frac{q \chi_k\Bar{S}_r(r_i)}{m}\right)^l}{(1-j\frac{q \chi_k\Bar{S}_r(r_i)}{m})^{m-\sfrac{1}{2}}}.
\end{equation}}
\end{theorem}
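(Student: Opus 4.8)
The plan is to derive \eqref{eq:CDF_cond} from the Gil-Pelaez inversion theorem and to reduce \eqref{eq:Phi_E_def}--\eqref{eq:zeta_final} to the evaluation of one per-BS characteristic function. Conditioned on $\Psi$ the distances $\{r_i\}$ are frozen and, $\mathcal B$ being a bounded disk, $\mathcal S=\sum_{i\in\Psi}\bar{S}_r(r_i)\,G_i\,|h_i|^2$ is almost surely a finite sum whose only remaining randomness is the small-scale one: the fadings $|h_i|^2$ and the beam orientations, entering through $G_i=G_{mc}(\varphi_i)$, which are mutually independent across BSs since distinct ULAs serve distinct users. Hence the conditional characteristic function factorizes,
\[
\phi_E(q|\Psi)=\mathbb E\!\left[e^{jq\mathcal S}\,\middle|\,\Psi\right]=\prod_{i\in\Psi}\mathbb E\!\left[\exp\!\left(jq\,\bar{S}_r(r_i)\,G_i\,|h_i|^2\right)\right],
\]
and substituting into the Gil-Pelaez formula for the conditional law of $\mathcal S$ (which has no atom at $T_e>0$) gives \eqref{eq:CDF_cond}. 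It then remains to evaluate the generic factor and to recognize it as the bracket in \eqref{eq:Phi_E_def}.

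For a BS at distance $r$ I would average first over the Nakagami fading: since $|h|^2\sim\mathrm{Gamma}(m,1/m)$, its characteristic function gives $\mathbb E_{|h|^2}[\exp(jq\bar{S}_r(r)G|h|^2)]=(1-jq\bar{S}_r(r)G/m)^{-m}$ for all real $q$. With $\varphi$ uniform on the serving sector $[-\pi/3,\pi/3[$ (equivalent to $\xi_i$ uniform on $[0,2\pi/3[$ because $G_{mc}$ is even), the factor becomes $\tfrac{3}{2\pi}\int_{-\pi/3}^{\pi/3}(1-jq\bar{S}_r(r)G_{mc}(\varphi)/m)^{-m}\,d\varphi$. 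Writing the integrand as $1+\big[(1-jq\bar{S}_r(r)G_{mc}(\varphi)/m)^{-m}-1\big]$, the constant contributes $\tfrac{3}{2\pi}\cdot\tfrac{2\pi}{3}=1$ while the bracket vanishes wherever $G_{mc}=0$; hence the factor equals $1$ plus $\tfrac{3}{2\pi}$ times the sum of the integrals of the bracket over the main-lobe band $|\varphi|\le 2/N$ and the side-lobe bands $\tfrac{2k}{N}\le|\varphi|\le\tfrac{2k+2}{N}$, $1\le k\le k_{\max}$ (the hypothesis $k_{\max}<\floor*{N\sqrt3/4-1}$ keeping these bands inside the sector).

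On each band I would substitute $\theta=N\pi\varphi/4$ for the main lobe and $\theta=N\pi\varphi/2$ for the $k$-th side lobe, then use the evenness of the integrands, the identity $\int_0^{\pi/2}g(\cos^2\theta)\,d\theta=\tfrac12\int_0^\pi g(\sin^2\theta)\,d\theta$, and the $\pi$-periodicity of $\sin^2$. Every band then collapses onto the single integral $I_k(r,q):=\int_0^\pi\big(1-\tfrac{jq\chi_k\bar{S}_r(r)}{m}\sin^2\theta\big)^{-m}\,d\theta$, with $\chi_0=1$ for the main lobe; keeping track of the Jacobians, the main-lobe band contributes $\tfrac{4}{N\pi}(I_0-\pi)$ and the $k$-th side lobe $\tfrac{4}{N\pi}(I_k-\pi)$, so after summing over $0\le k\le k_{\max}$ the factor equals $1+\tfrac{6}{N\pi}\big(\sum_{k=0}^{k_{\max}} I_k/\pi-(k_{\max}+1)\big)$, which is exactly \eqref{eq:Phi_E_def} once $I_k=\zeta_k$ is established.

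That last identity is the crux. Expanding $(1-a\sin^2\theta)^{-m}$ binomially and using $\int_0^\pi\sin^{2n}\theta\,d\theta=\pi\,(1/2)_n/n!$ yields $I_k=\pi\,{}_2F_1(m,\tfrac12;1;a)$ with $a=jq\chi_k\bar{S}_r(r)/m$; Euler's transformation ${}_2F_1(\alpha,\beta;\gamma;z)=(1-z)^{\gamma-\alpha-\beta}{}_2F_1(\gamma-\alpha,\gamma-\beta;\gamma;z)$ turns this into $\pi(1-a)^{1/2-m}\,{}_2F_1(1-m,\tfrac12;1;a)$, and for integer $m$ the transformed series terminates at $l=m-1$ with coefficients $\binom{m-1}{l}\binom{-1/2}{l}$, reproducing \eqref{eq:zeta_final}. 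The points needing care are the restriction to integer $m$ (otherwise only the nonterminating hypergeometric form is available), fixing one branch of $(1-a)^{m-1/2}$ — harmless since $a$ is purely imaginary, so $1-a$ never meets the cut — and justifying the termwise integrations and the interchange with the finite product over $\Psi$, all straightforward given that $\mathcal B$ is bounded.
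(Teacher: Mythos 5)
Your proposal is correct and follows the same skeleton as the paper's Appendix A: factorize the conditional CF over the (frozen) points of $\Psi$, average the Nakagami-$m$ fading to get the per-BS factor $(1-jq\bar S_r(r_i)G(\xi_i)/m)^{-m}$, split the angular average lobe by lobe according to \eqref{eq:Gmc}, and rescale each band so that every lobe collapses onto the single canonical integral $\int_0^\pi(1-a\sin^2\theta)^{-m}d\theta$ with $a=jq\chi_k\bar S_r(r_i)/m$; your bookkeeping of the Jacobians reproducing the prefactor $\tfrac{6}{N\pi}$ and the offset $-(1+k_{\max})$ is exactly right. The only place you genuinely diverge is the evaluation of that canonical integral. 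The paper proceeds by elementary substitutions (clearing $\cos^{2m}(y/2)$, then $u=\tan(y/2)$, then a rescaled tangent, then the binomial theorem together with the tabulated value $\int_0^\pi(-\tan^2x-1)^{-l}dx=\mybinom[0.7]{-1/2}{l}\pi$), whereas you identify the integral as $\pi\,{}_2F_1(m,\tfrac12;1;a)$ and apply Euler's transformation to obtain the terminating series $(1-a)^{1/2-m}\sum_{l=0}^{m-1}\mybinom[0.7]{m-1}{l}\mybinom[0.7]{-1/2}{l}a^l$. Both routes land on \eqref{eq:zeta_final}; yours is arguably cleaner in that it makes explicit the two points the paper glosses over (the need for integer $m$ so the series terminates, and the branch of $(1-a)^{m-1/2}$, which is indeed unproblematic for purely imaginary $a$), at the cost of having to invoke analytic continuation since the term-by-term binomial expansion only converges for $|a|<1$ while $q$ ranges over $(0,\infty)$ --- a gap you flag but should close by noting that both sides of $I_k=\pi\,{}_2F_1(m,\tfrac12;1;a)$ are analytic on the imaginary axis. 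The paper's substitution chain avoids this continuation argument entirely.
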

\begin{proof}
The expression in \eqref{eq:CDF_cond} is derived by applying Gil-Pelaez' theorem. The proof of \eqref{eq:Phi_E_def} is available in Appendix~\ref{app:1}. 
\end{proof}

\subsection{Meta Distribution of EMFE and Beta-Approximation}
\label{ssec:meta}
The meta distribution of the EMFE \eqref{eq:meta_def} can be rewritten as $\mathbb P_\Psi \left[\ln(F_{emf}(T_e|\Psi)) > \ln(s)\right]$. Using the Gil-Pelaez theorem, it follows that 
{\smalltonormalsize
\begin{equation}\label{eq:meta_GP}
    \mathcal{F}_{F_{emf}}(T_{e}, s) = \frac{1}{2}+\frac{1}{\pi} \int_0^{\infty} \textrm{\normalfont Im}\left[e^{-j q \ln(s)}\mathcal{M}_{jq}(T_e)\right] q^{-1}\,dq
\end{equation}}where $\mathcal{M}_{jq}(T_e) = \mathbb E_\Psi\left[\left(F_{emf}(T_e|\Psi)\right)^{jq}\right]$ is the moment of order $jq$ of the conditional CDF of EMFE. Unlike the meta distribution of the SINR, the moments are not mathematically tractable for any value of~$q$. Although an exact derivation of the meta distribution of EMFE is not possible, a close approximation is provided by the beta distribution, leveraging the methodology of \cite{uplink_meta}, based on the knowledge of the first two moments of the conditional CDF, for which a tractable expression can be found.
\begin{proposition}\label{prop:beta_approx}The meta distribution of EMFE for the propagation model in \eqref{eq:model} and the antenna gain in \eqref{eq:Gmc}, \eqref{eq:meta_GP}, can be approximated by
{\smalltonormalsize
\begin{equation}\label{eq:approx_beta}
    \Tilde{\mathcal{F}}_{F_{emf}}(T_{e}, s) = 1-\mathcal{I}_s\left[\Tilde{\alpha}_{\mathcal{S}}(T_e), \Tilde{\beta}_{\mathcal{S}}(T_e)\right]
\end{equation}}
where $\mathcal{I}_s(a,b) = \frac{B(s;a,b)}{B(a,b)}$ is the regularized incomplete beta function, $B(s;a,b) = \int_0^s t^{a-1}(1-t)^{b-1}$ is the incomplete beta function, $B(a,b) = B(1;a,b)$ the beta function and $\Tilde{\alpha}_{\mathcal{S}}(T_e)$ and $\Tilde{\beta}_{\mathcal{S}}(T_e)$ are the parameters of the beta distribution obtained via moment matching. Under the condition $\mathcal{M}_{1}(T_e)< \mathcal{M}_{2}(T_e)$, their expressions are given by
{\smalltonormalsize
\begin{equation}
    \Tilde{\alpha}_{\mathcal{S}}(T_e) = \mathcal{M}_{1}(T_e) \left(\frac{\mathcal{M}_{1}(T_e)\left(1-\mathcal{M}_{1}(T_e)\right)}{\mathcal{M}_{2}(T_e)-\mathcal{M}_{2}^2(T_e)}-1\right)
\end{equation}}and
{\smalltonormalsize
\begin{equation}
    \Tilde{\beta}_{\mathcal{S}}(T_e) = \left(1-\mathcal{M}_{1}(T_e)\right) \left(\frac{\mathcal{M}_{1}(T_e)\left(1-\mathcal{M}_{1}(T_e)\right)}{\mathcal{M}_{2}(T_e)-\mathcal{M}_{2}^2(T_e)}-1\right).
\end{equation}}
\end{proposition}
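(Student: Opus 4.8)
The plan is to follow the moment-matching heuristic of \cite{Haenggi16,uplink_meta}. For every fixed $T_e$, the conditional CDF $F_{emf}(T_e\mid\Psi)$ is a random variable (the randomness being that of $\Psi$) taking values in $[0,1]$, and \eqref{eq:meta_def} shows that $\mathcal{F}_{F_{emf}}(T_e,s)$ is exactly its complementary CDF evaluated at $s$. I would approximate the law of $F_{emf}(T_e\mid\Psi)$ by the $\mathrm{Beta}\big(\Tilde{\alpha}_{\mathcal S}(T_e),\Tilde{\beta}_{\mathcal S}(T_e)\big)$ law whose first two moments coincide with $\mathcal{M}_1(T_e)$ and $\mathcal{M}_2(T_e)$. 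Since the CDF of a $\mathrm{Beta}(a,b)$ variable is the regularized incomplete beta function $\mathcal{I}_s(a,b)$, substituting this surrogate into the complementary CDF yields directly $\Tilde{\mathcal F}_{F_{emf}}(T_e,s)=1-\mathcal{I}_s\big(\Tilde{\alpha}_{\mathcal S}(T_e),\Tilde{\beta}_{\mathcal S}(T_e)\big)$, i.e.\ \eqref{eq:approx_beta}. This reduces the proposition to (i) expressing the beta parameters through the first two moments, and (ii) producing tractable expressions for those moments.

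For (i), I would use that $Y\sim\mathrm{Beta}(a,b)$ has $\mathbb{E}[Y]=a/(a+b)$ and $\mathrm{Var}[Y]=ab/\big((a+b)^2(a+b+1)\big)$. Imposing $\mathbb{E}[Y]=\mathcal{M}_1(T_e)$ and $\mathrm{Var}[Y]=\mathcal{M}_2(T_e)-\mathcal{M}_1^2(T_e)$ gives a $2\times2$ system; eliminating $a+b$ between the two relations and back-substituting produces the stated closed forms for $\Tilde{\alpha}_{\mathcal S}(T_e)$ and $\Tilde{\beta}_{\mathcal S}(T_e)$, both proportional to the common factor $\mathcal{M}_1(1-\mathcal{M}_1)/\big(\mathcal{M}_2-\mathcal{M}_1^2\big)-1$. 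Since $F_{emf}(T_e\mid\Psi)$ is a non-degenerate $[0,1]$-valued variable, one has $0<\mathcal{M}_2(T_e)-\mathcal{M}_1^2(T_e)<\mathcal{M}_1(T_e)\big(1-\mathcal{M}_1(T_e)\big)$, which is precisely the condition making this common factor strictly positive, hence $\Tilde{\alpha}_{\mathcal S}(T_e),\Tilde{\beta}_{\mathcal S}(T_e)>0$ and the surrogate law well defined.

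For (ii), the first moment is the easy one: by the tower property $\mathcal{M}_1(T_e)=\mathbb{E}_\Psi\big[\mathbb{P}[\mathcal S<T_e\mid\Psi]\big]=\mathbb{P}[\mathcal S<T_e]$, so one may take $\mathbb{E}_\Psi[\cdot]$ inside the Gil-Pelaez integral of \eqref{eq:CDF_cond} (Fubini) and is left with $\mathbb{E}_\Psi[\phi_E(q\mid\Psi)]$; as \eqref{eq:Phi_E_def} is a product over the atoms of $\Psi$, the probability generating functional of the PPP turns this into $\exp\!\big(-2\pi\lambda\int_{r_e}^{\tau}\big(1-f(r,q)\big)\,r\,dr\big)$, with $f(r,q)$ the per-atom factor in \eqref{eq:Phi_E_def}, i.e.\ a single radial integral. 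For the second moment I would write $F_{emf}^2(T_e\mid\Psi)$ as a double Gil-Pelaez integral over independent frequency variables $q_1,q_2$, interchange expectation with the double integral, and evaluate $\mathbb{E}_\Psi\big[\phi_E(q_1\mid\Psi)\,\phi_E(q_2\mid\Psi)\big]=\mathbb{E}_\Psi\big[\prod_{i\in\Psi}f(r_i,q_1)f(r_i,q_2)\big]$ again via the PGFL, now with per-atom factor $f(\cdot,q_1)f(\cdot,q_2)$.

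I expect the main obstacle to lie in (ii) rather than in the algebra of (i): one must justify the Fubini interchanges despite the $1/q$ singularity of the Gil-Pelaez integrand at the origin (controlled via the Lipschitz behaviour of the conditional CDF near $T_e$ together with absolute integrability of $\phi_E$), and then handle the coupled radial integral $\int_{r_e}^{\tau}\big(1-f(r,q_1)f(r,q_2)\big)\,r\,dr$ appearing in the second moment, which does not factor across $q_1$ and $q_2$. I would also emphasize, as in \cite{Haenggi16}, that the beta fit is an approximation with no a priori error bound, so what the argument establishes rigorously is the functional form \eqref{eq:approx_beta} together with the moment-matching identities, the accuracy of the approximation being assessed numerically.
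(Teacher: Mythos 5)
Your proposal is correct and follows essentially the same route the paper takes: the paper gives no explicit proof of Proposition~\ref{prop:beta_approx}, relying on the standard moment-matching methodology of \cite{Haenggi16,uplink_meta}, which is exactly what you reconstruct (approximate the law of $F_{emf}(T_e|\Psi)$ by a beta law, identify $a/(a+b)$ and $ab/((a+b)^2(a+b+1))$ with $\mathcal{M}_1$ and $\mathcal{M}_2-\mathcal{M}_1^2$, and solve). One substantive remark: your elimination yields the common factor $\mathcal{M}_1(1-\mathcal{M}_1)/(\mathcal{M}_2-\mathcal{M}_1^2)-1$ with the \emph{variance} $\mathcal{M}_2-\mathcal{M}_1^2$ in the denominator, whereas the proposition as printed has $\mathcal{M}_2-\mathcal{M}_2^2$, and its positivity condition is stated as $\mathcal{M}_1<\mathcal{M}_2$, which can never hold for a non-degenerate $[0,1]$-valued variable (since $F^2\le F$ implies $\mathcal{M}_2\le\mathcal{M}_1$); your version, with the condition $0<\mathcal{M}_2-\mathcal{M}_1^2<\mathcal{M}_1(1-\mathcal{M}_1)$, is the standard and correct one, so the paper's statement appears to contain a subscript typo rather than your derivation containing an error.
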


\begin{theorem}\label{th:M1}
    The first-order moment of the conditional distribution of EMFE \eqref{eq:CDF_cond} is expressed as
    {\smalltonormalsize
    \begin{equation}
        \mathcal{M}_{1}(T_e) = F_{emf}(T_e) = \frac{1}{2}-\frac{1}{\pi}\int_{0}^{\infty} \textrm{\normalfont Im}\!\left[\phi_{E}(q)\,e^{-jqT_{e}}\right]\,q^{-1}\,dq
    \end{equation}} where the unconditioned CF of EMFE, $\phi_{E}(q)$, is given in \eqref{phiE_final} at the top of the next page. Notation $[f(x)]_{x = a}^{x=b} = f(b)-f(a)$ is employed.
    \begin{figure*}[!h]
{\footnotesize
    \begin{equation}\label{phiE_final}
        \phi_{E}(q) = \exp\left(\frac{-6\lambda (1+k_{\textrm{\normalfont max}})}{N}   [r^2]_{r = r_e}^{r = \tau}-\frac{24\lambda}{N}\sum_{k = 0}^{k_{\textrm{\normalfont max}}}\sum_{l = 0}^{m-1}\Bigg[\frac{ \mybinom[0.7]{-1/2}{l}\mybinom[0.7]{m-1}{l}{(r^2+z^2){}_2F_1\left(1, \scalebox{0.9}{1-\textit{l}+}\frac{2}{\alpha}; \frac{1}{2}\scalebox{0.9}{-\textit{l+m}+}\frac{2}{\alpha}; \frac{m}{jq \chi_k\Bar{S}_r(r)}\right)}}{\left(4+\alpha(2m-2l-1)\right)\left(jq \chi_k\Bar{S}_r(r)/m\right)^{1-l} \left(1-j {q \chi_k\Bar{S}_r(r)}/{m}\right)^{m-\sfrac{3}{2}}}\Bigg]_{r = r_e}^{r = \tau}\right)
    \end{equation}
}
{\scriptsize
\begin{multline}\label{eq:gamma}
    \gamma_\pm(q, q') = \exp\left(\frac{12 \lambda}{N}\Bigg[(1+k_{\textrm{\normalfont max}})\left(\frac{3(1+k_{\textrm{\normalfont max}})}{N\pi}-1\right)r^2\right.\\
    \left.-\sum_{k = 0}^{k_{\textrm{\normalfont max}}}\!\sum_{l = 0}^{m-1}\!\mybinom[0.7]{-1/2}{l}\mybinom[0.7]{m-1}{l}\left(\frac{6}{N\pi}\!\sum_{p = 0}^{k_{\textrm{\normalfont max}}} \!\sum_{l' = 0}^{m-1} \!\mybinom[0.7]{-1/2}{l'}\mybinom[0.7]{m-1}{l'} \frac{r^{2+\alpha(2m-1)}  F_1\!\left(2m\!-\!1\!-\!l\!-\!l'\!+\!\frac{2}{\alpha}, \frac{2m-1}{2}, \frac{2m-1}{2}, 2m\!-\!l-l'\!+\!\frac{2}{\alpha}, \frac{-j m}{ q \chi_k \Bar{S}_r(r)}, \frac{\mp j m}{ q' \chi_k \Bar{S}_r(r)}\!\right)}{\left(2+\alpha(2m-1-l-l')\right)( j q \chi_k \Bar{S}_r(r)/m)^{m-l-\sfrac{1}{2}}(\pm  j q' \chi_k \Bar{S}_r(r)/m)^{m-l'-\sfrac{1}{2}}}  \right. \right.\\
    \left.\left.\!+ \!\frac{2{(r^2+z^2)\!\left(1\!-\!\frac{6(1+k_{\textrm{\normalfont max}})}{N\pi}\right)}}{\left(4+\alpha(2m-2l-1)\right)}
    \!\left(\!
    \frac{{}_2F_1\!\left(1, \scalebox{0.9}{1-\textit{l}+}\frac{2}{\alpha}; \frac{1}{2}\scalebox{0.9}{-\textit{l+m}+}\frac{2}{\alpha}; \frac{-j m}{ q \chi_k \Bar{S}_r(r)}\right)}{\left( j q \chi_k \Bar{S}_r(r)/m\right)^{1-l} \left(1- j q \chi_k \Bar{S}_r(r)/m\right)^{m-\sfrac{3}{2}}}\!+\!\frac{{}_2F_1\!\left(1, \scalebox{0.9}{1-\textit{l}+}\frac{2}{\alpha}; \frac{1}{2}\scalebox{0.9}{-\textit{l+m}+}\frac{2}{\alpha}; \frac{\mp j m}{ q' \chi_k \Bar{S}_r(r)}\right)}{ \left(\pm j q' \chi_k \Bar{S}_r(r)/m\right)^{1-l}  \left(1\mp  j q' \chi_k \Bar{S}_r(r)/m\right)^{m-\sfrac{3}{2}}}\right)
    \right)\!\Bigg]_{r = r_e}^{r = \tau}\!\right)\!
    \end{multline}
}

\hrulefill
\end{figure*}
\end{theorem}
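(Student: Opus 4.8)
The plan is to prove the two parts of the statement separately: the identity $\mathcal{M}_{1}(T_e)=F_{emf}(T_e)$ together with its Gil-Pelaez representation, and then the closed form \eqref{phiE_final} for the unconditional characteristic function (CF) $\phi_{E}(q)=\mathbb{E}[e^{jq\mathcal{S}}]$. The first part is immediate: by definition $\mathcal{M}_{1}(T_e)=\mathbb{E}_\Psi\!\left[F_{emf}(T_e\mid\Psi)\right]=\mathbb{E}_\Psi\!\left[\mathbb{P}(\mathcal{S}<T_e\mid\Psi)\right]$, which by the law of total probability equals $\mathbb{P}(\mathcal{S}<T_e)=F_{emf}(T_e)$, i.e.\ the CDF of $\mathcal{S}$ once \emph{every} source of randomness --- BS locations, fading, and beam orientations --- is averaged out; applying Gil-Pelaez' theorem to $\mathcal{S}$ with its CF $\phi_{E}(q)$ then yields the displayed integral, so the whole problem reduces to computing $\phi_{E}(q)$.

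To compute $\phi_{E}(q)$ I would use that the conditional CF of Theorem~\ref{th:CDF_cond} factorizes, $\phi_{E}(q\mid\Psi)=\prod_{i\in\Psi}f(R_i,q)$ with $f(r,q)=1+\tfrac{6}{N\pi}\big(\sum_{k=0}^{k_{\mathrm{max}}}\zeta_k(r,q)/\pi-(1+k_{\mathrm{max}})\big)$ as in \eqref{eq:Phi_E_def}, and that $\phi_{E}(q)=\mathbb{E}_\Psi[\phi_{E}(q\mid\Psi)]$ by the tower property. Since $\Psi$ is a homogeneous PPP of intensity $\lambda$ on the bounded annulus $\{x:r_e\le|x|\le\tau\}$, its probability generating functional gives $\phi_{E}(q)=\exp\!\big(-\lambda\!\int(1-f(|x|,q))\,dx\big)$; the integrand depends on $x$ only through $r=|x|$, so the angular integration contributes $2\pi$ and leaves $\phi_{E}(q)=\exp\!\big(-2\pi\lambda\!\int_{r_e}^{\tau}(1-f(r,q))\,r\,dr\big)$. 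The constant summand $-(1+k_{\mathrm{max}})$ of $1-f$ integrates elementarily and yields the factor $\exp\!\big(-\tfrac{6\lambda(1+k_{\mathrm{max}})}{N}[r^2]_{r=r_e}^{r=\tau}\big)$, which is the first term of \eqref{phiE_final}.

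The crux is the radial integral $\int_{r_e}^{\tau}\zeta_k(r,q)\,r\,dr$. Because $m$ is a positive integer (as needed for the finite sum in \eqref{eq:zeta_final} to be meaningful), $\zeta_k$ is a \emph{finite} sum over $l=0,\dots,m-1$ of terms proportional to $\Bar{S}_{r}(r)^{l}\,(1-jq\chi_k\Bar{S}_{r}(r)/m)^{-(m-\sfrac{1}{2})}$; recalling $\Bar{S}_{r}(r)=\tfrac{P_t}{4\pi}(r^2+z^2)^{-\alpha/2}$, I would pass to the variable $v=(r^2+z^2)^{\alpha/2}$ --- so that $r\,dr=\tfrac{1}{\alpha}v^{2/\alpha-1}\,dv$ and $\Bar{S}_{r}(r)=\tfrac{P_t}{4\pi v}$ --- and write the Nakagami factor as $(1-c/v)^{m-\sfrac{1}{2}}=(-c/v)^{m-\sfrac{1}{2}}(1-v/c)^{m-\sfrac{1}{2}}$ with $c=jq\chi_k P_t/(4\pi m)$, so that $v/c=m/(jq\chi_k\Bar{S}_{r}(r))$. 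Each term then reduces to the canonical integral $\int v^{s-1}(1-v/c)^{-(m-\sfrac{1}{2})}\,dv$, $s=m-l-\sfrac{1}{2}+\tfrac{2}{\alpha}$, whose antiderivative is $\tfrac{v^{s}}{s}\,{}_2F_1(m-\sfrac{1}{2},s;s+1;v/c)$; an Euler transformation rewrites this as $\tfrac{v^{s}}{s}(1-v/c)^{\sfrac{3}{2}-m}\,{}_2F_1\!\big(1,1-l+\tfrac{2}{\alpha};\tfrac{1}{2}-l+m+\tfrac{2}{\alpha};v/c\big)$, and gathering the resulting pieces --- $v^{s}=(r^2+z^2)^{1+\alpha(2m-2l-1)/4}$, the constant $1/(\alpha s)=2/(4+\alpha(2m-2l-1))$, the $\binom{-1/2}{l}\binom{m-1}{l}$ and the $(jq\chi_k/m)^l$ from $\zeta_k$, and the factors $(-c)^{m-\sfrac{1}{2}}$ and $(1-v/c)^{\sfrac{3}{2}-m}$ --- then collapses, after re-expressing everything in $r$, to the bracketed term of \eqref{phiE_final} evaluated between $r_e$ and $\tau$; combined with the first term this is \eqref{phiE_final}. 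I expect the delicate points to be: (i) keeping the numerous constant prefactors and powers of $jq\chi_k\Bar{S}_{r}(r)/m$ straight through the chain of substitutions; (ii) tracking the principal branches of the half-integer powers --- those coming from the Nakagami exponent $m-\sfrac{1}{2}$ and from the Euler factor $(1-v/c)^{\sfrac{3}{2}-m}$ --- which is precisely what produces the overall sign $-\tfrac{24\lambda}{N}$ of the lobe term in \eqref{phiE_final}; and (iii) the (elementary, since $0<r_e\le r\le\tau<\infty$) verification of the finiteness conditions for the PGFL and the convergence of the Gil-Pelaez integral.
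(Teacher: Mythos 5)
Your proposal is correct and follows essentially the same route as the paper: reduce $\mathcal{M}_1$ to the unconditional CF via $\mathbb{E}_\Psi[\phi_E(q|\Psi)]$ and Gil-Pelaez, apply the PPP probability generating functional over the annulus $r_e\le r\le\tau$, and evaluate the radial integral of each $\zeta_k$ term in closed form as a Gauss hypergeometric function. The only (immaterial) difference is how the ${}_2F_1$ is identified --- the paper normalizes the radial integral to $[0,1]$ and invokes the Euler integral representation directly, whereas you compute an incomplete-beta-type antiderivative and then apply an Euler transformation; both yield the same parameters $\left(1,\,1-l+\tfrac{2}{\alpha};\,\tfrac12-l+m+\tfrac{2}{\alpha}\right)$ and argument $m/(jq\chi_k\Bar{S}_r(r))$ appearing in \eqref{phiE_final}.
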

\begin{proof}
The proof is provided in Appendix~\ref{app:2}.
\end{proof}

\begin{theorem}\label{th:M2} The second-order moment of the conditional distribution of EMFE \eqref{eq:CDF_cond} is given by
{\smalltonormalsize
\begin{equation}\label{eq:M2}
    \mathcal{M}_{2}(T_{e}) = -{1}/{4} + F_{emf}(T_{e})
    + {\pi^{-2}}\, {\Omega(T_e)}
\end{equation}}
where 
{\smalltonormalsize
\begin{equation}\label{eq:Omega}
    \Omega(T_e) = \int_{0}^{\infty}\int_{0}^{\infty}{\omega(T_e, q, q')}\,q^{-1}\,q'^{-1}\,dq\,dq',
\end{equation}}
{\small
\begin{equation}
        \omega(q, q';T_{e}) = \frac{1}{2}\,\textrm{\normalfont Re}\!\left[\gamma_-(q, q')\,e^{-jT_{e}(q-q')}\!+\!\gamma_+(q, q')\,e^{-jT_{e}(q+q')}\right]
\end{equation}}and $\gamma_\pm(q, q')$ is given in \eqref{eq:gamma} at the top of the next page.
\end{theorem}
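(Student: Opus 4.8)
The plan is to square the Gil--Pelaez representation \eqref{eq:CDF_cond}, take the spatial expectation term by term, and reduce everything to products of characteristic functions over a single PPP. Write $I(q|\Psi) := \mathrm{Im}\bigl[\phi_E(q|\Psi)\,e^{-jqT_e}\bigr]$, so that $F_{emf}(T_e|\Psi) = \tfrac12 - \tfrac1\pi\int_0^\infty I(q|\Psi)\,q^{-1}\,dq$. Expanding the square,
\[
\bigl(F_{emf}(T_e|\Psi)\bigr)^2 = \tfrac14 - \tfrac1\pi\int_0^\infty I(q|\Psi)\,q^{-1}\,dq + \tfrac1{\pi^2}\int_0^\infty\!\!\int_0^\infty I(q|\Psi)\,I(q'|\Psi)\,q^{-1}q'^{-1}\,dq\,dq'.
\]
Taking $\mathbb E_\Psi[\cdot]$ and exchanging expectation with the integrals (justified by absolute convergence, as in the single-integral case of Theorem~\ref{th:M1}), the constant gives $\tfrac14$ and the single integral gives $-\tfrac1\pi\int_0^\infty\mathrm{Im}\bigl[\phi_E(q)\,e^{-jqT_e}\bigr]\,q^{-1}\,dq = F_{emf}(T_e)-\tfrac12$, since $\phi_E(q)=\mathbb E_\Psi[\phi_E(q|\Psi)]$ is exactly the unconditional characteristic function of Theorem~\ref{th:M1}. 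Together these produce the $-\tfrac14 + F_{emf}(T_e)$ part of \eqref{eq:M2}, so it remains to identify the expectation of the double-integral term with $\pi^{-2}\Omega(T_e)$.

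For the double integral I would push $\mathbb E_\Psi$ inside and use the elementary identity $\mathrm{Im}[A]\,\mathrm{Im}[B] = \tfrac12\bigl(\mathrm{Re}[A\bar B] - \mathrm{Re}[AB]\bigr)$ with $A = \phi_E(q|\Psi)e^{-jqT_e}$, $B = \phi_E(q'|\Psi)e^{-jq'T_e}$. Because the per-BS factor in \eqref{eq:Phi_E_def}--\eqref{eq:zeta_final} has real coefficients, one has $\overline{\phi_E(q'|\Psi)} = \phi_E(-q'|\Psi)$, so $A\bar B = \phi_E(q|\Psi)\phi_E(-q'|\Psi)\,e^{-jT_e(q-q')}$ and $AB = \phi_E(q|\Psi)\phi_E(q'|\Psi)\,e^{-jT_e(q+q')}$. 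Hence $\mathbb E_\Psi[I(q|\Psi)I(q'|\Psi)]$ is governed by the two joint moments $\gamma_\mp(q,q') := \mathbb E_\Psi\bigl[\phi_E(q|\Psi)\,\phi_E(\mp q'|\Psi)\bigr]$; substituting, collecting the oscillatory factors and taking real parts yields the kernel $\omega$, with the sign bookkeeping between the $A\bar B$ and $AB$ contributions (and the precise meaning of $\gamma_+$) tracked carefully at this stage.

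The heart of the proof is then the evaluation of $\gamma_\pm(q,q')$. Since $\phi_E(q|\Psi)$ and $\phi_E(\pm q'|\Psi)$ are products over the \emph{same} PPP, I would merge them into $\prod_{i\in\Psi} f_q(r_i)\,f_{\pm q'}(r_i)$ with $f_q(r) = 1 + \tfrac{6}{N\pi}\bigl(\sum_{k=0}^{k_{\max}}\zeta_k(r,q)/\pi - (1+k_{\max})\bigr)$, and apply the probability generating functional of the PPP on the annulus $r_e\le r\le\tau$, i.e.\ $\mathbb E_\Psi\bigl[\prod_i h(r_i)\bigr] = \exp\bigl(-2\pi\lambda\int_{r_e}^{\tau}\bigl(1-h(r)\bigr)r\,dr\bigr)$, the angular and fading averages being already incorporated into $f_q$ exactly as in Appendices~\ref{app:1} and \ref{app:2}. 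Writing $f_q = 1+u_q$, one has $1-f_q f_{\pm q'} = -\bigl(u_q + u_{\pm q'} + u_q u_{\pm q'}\bigr)$: the purely constant pieces integrate against $r\,dr$ to the $r^2$-terms in \eqref{eq:gamma}, the pieces linear in a single $\zeta_k$ reproduce the ${}_2F_1$ radial integrals already computed for Theorem~\ref{th:M1} (recall $\bar S_r(r)\propto(r^2+z^2)^{-\alpha/2}$, cf.\ \eqref{phiE_final}), and the bilinear pieces $\zeta_k(r,q)\,\zeta_p(r,\pm q')$ lead to radial integrals of the form $\int r^{a}(r^2+z^2)^{b}\bigl(1-c_1(r^2+z^2)^{-\alpha/2}\bigr)^{-\mu}\bigl(1-c_2(r^2+z^2)^{-\alpha/2}\bigr)^{-\nu}dr$, which evaluate to Appell $F_1$ functions. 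Assembling all contributions into the PPP exponential gives \eqref{eq:gamma}, hence \eqref{eq:M2}.

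I expect the main obstacle to be precisely this last radial integration of the bilinear terms: keeping two distinct complex parameters $q$ and $\pm q'$ in play simultaneously, handling the half-integer exponents $m-\tfrac12$ and $m-\tfrac32$ coming from the Nakagami-$m$ characteristic function, respecting the finite limits $r_e,\tau$, and finally recognizing the closed form as the Appell $F_1$ expression in \eqref{eq:gamma}. A secondary technical point is to justify the Fubini exchanges and the convergence of the iterated $q$--$q'$ integral in \eqref{eq:Omega}, which should follow from the decay of $\phi_E$ and of $\gamma_\pm$ for large arguments together with the cancellation of the $q^{-1}q'^{-1}$ singularity against the oscillatory kernel, mirroring the single-variable argument used for Theorem~\ref{th:M1}.
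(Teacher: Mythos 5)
Your proposal follows essentially the same route as the paper's proof: square the Gil--Pelaez representation, identify the constant and single-integral terms with $1/4$ and $F_{emf}(T_e)-\tfrac12$, convert the product of imaginary parts into real parts involving the joint moments $\gamma_\pm(q,q')=\mathbb E_\Psi[\phi_E(q|\Psi)\,\phi_E(\pm q'|\Psi)]$ (equivalently, with a conjugate), and evaluate these via the probability generating functional, with the single-$\zeta_k$ terms giving ${}_2F_1$ integrals and the cross terms $\zeta_k\zeta_p$ giving Appell $F_1$ functions. The only point you gloss over is that the paper obtains the bilinear radial integral in closed form only for $z=0$ (the integrand in \eqref{eq:integration_zeta_zeta} uses $r^{-\alpha}$ rather than $(r^2+z^2)^{-\alpha/2}$), whereas you assert that the general-$z$ bilinear integral also reduces to an Appell $F_1$, which the paper states is not feasible.
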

\begin{proof}
The proof is provided in Appendix~\ref{app:3}.
\end{proof}
\section{Numerical Results}
\label{sec:numerical_results}
This section analyzes the meta distribution of the EMFE of the PU based on the system model described in Section~\ref{sec:system_model}. Unless specified otherwise, the parameters are set as follows: $f = \numprint[GHz]{3.5}$, $\lambda = \numprint[BS/km^2]{10}$, $P_t = \numprint[dBm/m^2]{66}$, $N$ = 64, $\alpha$ = 3.25, $z$ = \numprint[m]{30}, $\tau$ = \numprint[km]{3}, $m = 3$ and $k_{\textrm{\normalfont max}} = 9$.

The first- and second-order moments, as given by Theorems~\ref{th:M1} and \ref{th:M2}, are shown in Fig.~\ref{fig:BF_meta_moments} and compared to Monte-Carlo (MC) simulations with $10^4 \times 10^4$ realizations. The close agreement confirms the validity of the mathematical approach. The first-order moment corresponds to the CDF of EMFE, while the second-order moment characterizes deviation from this CDF. Maximum deviation occurs around the median EMFE (-38 dBm/m$^2$), while deviations are minimal at both ends of the CDF of EMFE. This indicates low uncertainty regarding extreme probability values.
\begin{figure}
    \centering
    \includegraphics[width=0.8\linewidth, trim={3cm, 9.2cm, 3cm, 10cm}, clip]{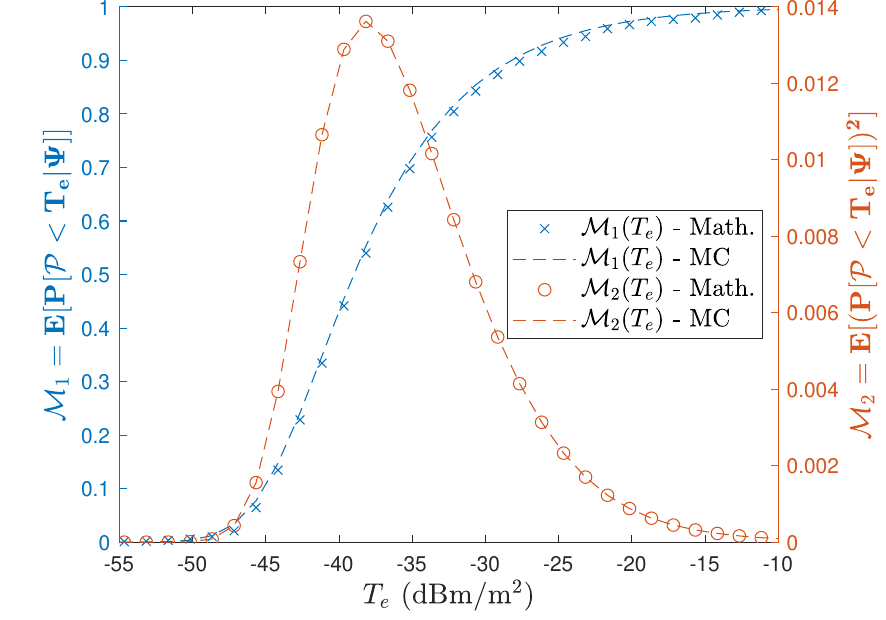}
    \caption{First- and second-order moments of the CDF of EMFE of PUs computed from the mathematical expressions (Math.) or from MC simulations (MC). The first-order moment corresponds to the CDF of EMFE of PUs.}
    \label{fig:BF_meta_moments}
\end{figure}

In Fig.\ref{fig:BF_meta_act_Te}, the meta distribution for various EMFE limits is depicted numerically ("Exact"), alongside its beta approximation ("Approx."), derived from Proposition~\ref{prop:beta_approx}. The significant alignment between the curves highlights the precision of this approximation. Despite the complexity of the mathematical expression of the moments, using SG proves advantageous due to the reduced computational time compared to MC simulations.
At $T_e \!= \!\numprint[dBm/m^2]{-32}$, over 90\% of PU locations have a 69\% probability of experiencing EMFE below this threshold. In other words, 90\% of PUs encounter EMFE below $T_e \!= \!\numprint[dBm/m^2]{-32}$ at least 69\% of the time. Increasing the threshold to $T_e = \numprint[dBm/m^2]{-23}$ sees the same proportion of PUs below it 90\% of the time, contrasting with only 22\% if reduced to $T_e \!= \!\numprint[dBm/m^2]{-41}$. Similarly, for 90\% of users being below the threshold 90\% of the time, $T_e\! \geq \!\numprint[dBm/m^2]{-32}$ is required (equivalent to 0.05~V/m  \cite{GontierAccess}) as shown by the green rectangle.
\begin{figure}
    \centering
    \includegraphics[width=0.8\linewidth, trim={3cm, 9.5cm, 3cm, 9.5cm}, clip]{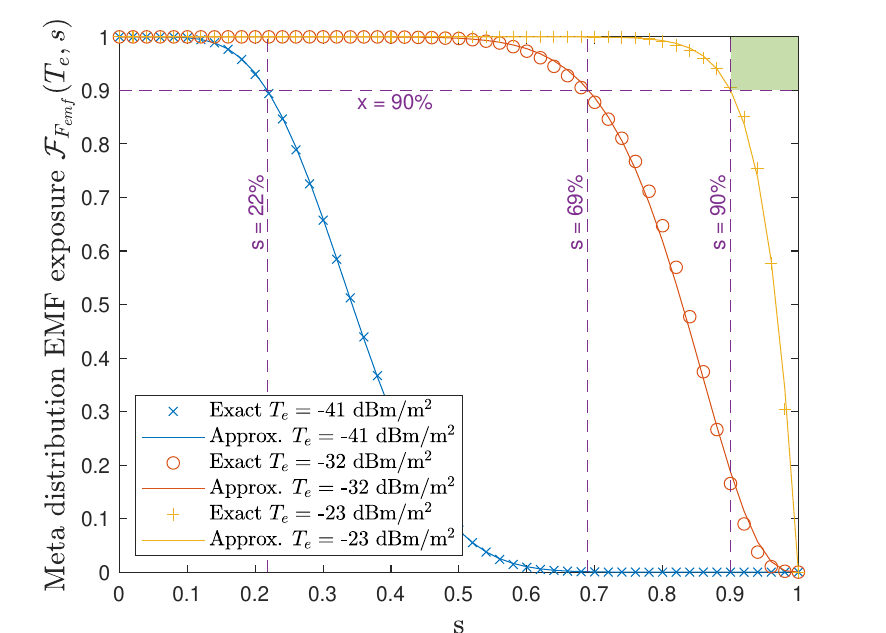}
    \caption{Meta distribution of EMFE of PUs for different EMFE limits $T_e$ and approximation by a beta distribution ($N = 64$)}
    \label{fig:BF_meta_act_Te}
\end{figure}

Finally, Fig.~\ref{fig:BF_meta_act_N} illustrates the meta distribution for various number of antenna elements in the ULA at $T_e = \numprint[dBm/m^2]{-32}$. The influence of $N$ on PU EMFE is notable, with 90\% below the threshold just 14\% of the time for $N = 16$, contrasting with 69\% for $N = 64$, and 92\% for $N = 256$.

\begin{figure}
    \centering
    \includegraphics[width=0.8\linewidth, trim={3cm, 9.5cm, 3cm, 10cm}, clip]{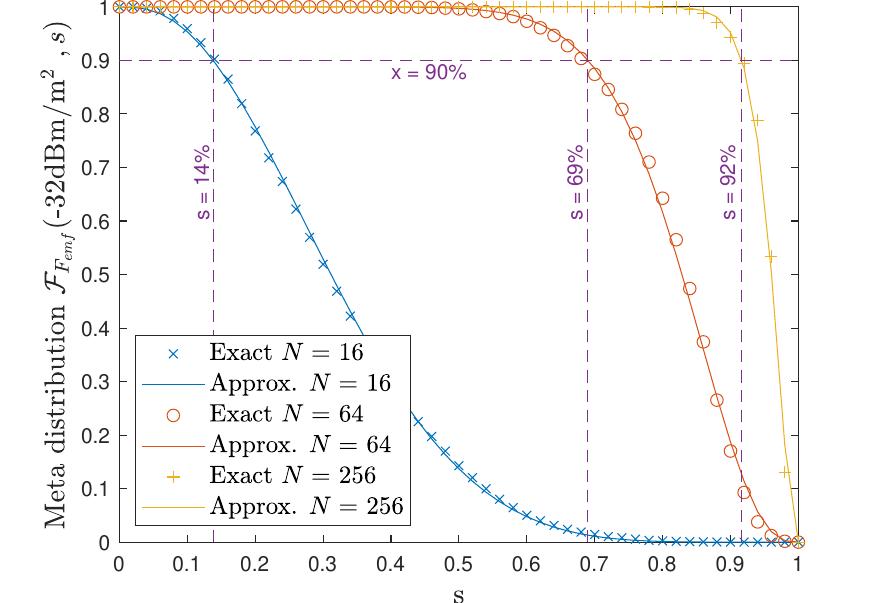}
    \caption{Meta distribution of EMFE of PUs for different values of $N$ and approximation by a beta distribution ($T_e = \numprint[dBm/m^2]{-32}$)}
    \label{fig:BF_meta_act_N}
\end{figure}

\section{Conclusion}
\label{sec:conclusion}
In conclusion, this paper has presented a comprehensive analysis of the meta distribution of the EMFE experienced by the PU within a PPP network with Nakagami-$m$ fading and DBF, implemented using a multi-cosine gain model. The meta distribution has been approximated by a beta distribution via the moment-matching method, for which the first- and second-order moments of the meta distribution have been calculated. The proposed approximation has been validated against numerical results. The analysis sheds light on the impact of various system parameters, such as the EMFE limit and the number of antennas in the ULA, on the meta distribution.
{\appendices
\section{Proof of Theorem~\ref{th:CDF_cond}}
\label{app:1}The CF of the EMFE conditioned on $\Psi$ is defined as
{\small
\begin{align}
\begin{split}
    &\phi_{E}(q|\Psi) = \mathbb E_{\xi, |h|}\Big[\exp\Big(j q \sum_{i \in \Psi} P_r(r_i)\Big)\Big|\Psi\Big] \\
    &\stackrel{(a)}{=} \prod_{i \in \Psi}\mathbb E_{\xi_i, |h_i|}\left[\exp\left(j q  \Bar{S}_r(r_i)G(\xi_i)|h_i|^2\right)\right] := \prod_{i \in \Psi}\phi_{E}(q|X_i).
\end{split}
\end{align}}Let $\phi_{E}(q|X_i)$ be the CF of the EMFE caused by the $i$th BS. Due to the two-by-two independence of the $\xi_i$ and $h_i$ for all $i$, the expectation operators and the product can be interchanged $(a)$. Considering the Nakagami-$m$ distribution of $|h_i|$, applying $\mathbb E_{|h_i|}$ yields:
{\smalltonormalsize
\begin{equation}\label{eq:phiE}
    \phi_{E}(q|X_i) = \mathbb E_{\xi_i}\left[\left(1-j q \Bar{S}_r(r_i)G(\xi_i)/m\right)^{-m}\right].
\end{equation}}
Expanding \eqref{eq:phiE} using \eqref{eq:Gmc} results in:
{\footnotesize
    \begin{multline}\label{eq:phiE2}
    \phi_{E}(q|X_i) = \frac{3}{\pi} \left(\int_0^{\frac{2}{N}} \left(1-j \frac{q \Bar{S}_r(r_i)}{m}\cos^2\left(\frac{N\pi \xi}{4}\right)\right)^{-m}d\xi\right.\\
    \!+\left.\sum_{k = 1}^{k_{\textrm{\normalfont max}}}\!\int_{\frac{2 i}{N}}^{\frac{2i+2}{N}}\! \frac{1}{\left(1-j \chi_k\frac{q \Bar{S}_r(r_i)}{m}\sin^2\left(\frac{N\pi \xi}{2}\right)\right)^{m}}d\xi+\!\int_{\frac{2i+2}{N}}^{\frac{\pi}{3}}\!d\xi\right)\!.\!
\end{multline}}
The rest of the proof, focusing on the integration of \eqref{eq:phiE2}, is available in Appendix~\ref{app:1}.
Starting from \eqref{eq:phiE2} and using the change of variable $y \to N\pi \xi/2$ for the first integral and $y \to N\pi \xi$ for the following integrals, using the relationship $\sin(x) = \cos(\pi/2-x)$ and exploiting the symmetry of $\cos^2(x)$, \eqref{eq:phiE2} can be reformulated~as
{\small
\begin{equation}
    \phi_{E}(q|X_i) = \frac{6}{N\pi^2} \sum_{k = 0}^{k_{\textrm{\normalfont max}}}\zeta_k(r_i, q)+1-\frac{6}{\pi N} (k_{\textrm{\normalfont max}}+1)
\end{equation}} with $\zeta_k(r_i, q) = \int_{0}^{\pi} \!\left(1-j \chi_k\frac{q \Bar{S}_r(r_i)}{m}\cos^2\left(y/2\right)\right)^{-m}dy$.
To solve the integral $\zeta_k(r_i, q)$, the denominator and numerator first have to be multiplied by $\cos^{-2m}\left(y/2\right)$. Second a change of variable $u \to \tan(y/2)$ has to be used. Third, a new change of variable $\tan(x) = u /\sqrt{1-j \chi_k{q \Bar{S}_r(r_i)}/{m}}$. Forth, the relationship $\sec^2(x) = 1+\tan^2(x)$ has to be used. Letting $a = {q \chi_k\Bar{S}_r(r_i)}/{m}$, this gives
{\smalltonormalsize
\begin{align}
\begin{split}
    \zeta_k(r_i, q) = \frac{1}{({1-ja})^{-1}} \int_0^{\pi} \left(\frac{\tan^2(x)+\frac{1}{1-j a}}{\tan^2(x)+1}\right)^{m-1} dx \\
    = \frac{1}{\sqrt{1-ja}} \int_0^{\pi} \left(1+\frac{ja}{1-ja}\frac{1}{\tan^2(x)+1}\right)^{m-1} dx.
\end{split}
\end{align}}
Using the binomial theorem, this can be rewritten as 
{\small
\begin{equation}\label{eq:zeta}
    \zeta_k(r_i, q) =  \sum_{l = 0}^{m-1} \mybinom[0.7]{m-1}{l}\int_0^{\pi}\!\frac{(ja)^l}{(1-ja)^{l+\sfrac{1}{2}}\left(\tan^2(x)+1\right)^{l}}  dx.
\end{equation}}
At last, we use the result
$
    \int_0^\pi \!\left(\!-\!\tan^2(x)\!-\!1\right)^{-l} \!dx \!= \!\mybinom[0.6]{-1/2}{l}\pi
$ to solve \eqref{eq:zeta}.
After some simplifications, \eqref{eq:zeta_final} is obtained.

\section{Proof of Theorem~\ref{th:M1}}
\label{app:2}
    Since $\mathcal{M}_{1}(T_e) = \mathbb E_{\Psi}\left[F_{emf}(T_e|\Psi)\right]$ with $F_{emf}(T_e|\Psi)$ given in \eqref{eq:CDF_cond}, it is enough to observe that the expectation operator can be interchanged with the integral and the imaginary part operator, its only effect being on $\phi_{E}(q|\Psi)$. Then, $\phi_{E}(q) = \mathbb E\left[\phi_{E}(q|\Psi)\right]$ with $\phi_{E}(q|\Psi)$ given in \eqref{eq:Phi_E_def}. The unconditioned CF is then obtained by applying the probability generating functional \cite{Baccelli1997StochasticGA}:
    {\footnotesizetosmall
    \begin{align}\label{eq:phiE_proof}
    \begin{split}
        &\phi_{E}(q) = \exp\left(\frac{-12\lambda}{N}  \int_{r_e}^{\tau}\left( 1+k_{\textrm{\normalfont max}}-\sum_{k = 0}^{k_{\textrm{\normalfont max}}}\frac{\zeta_k(r, q)}{\pi}\right)r \,dr\right)\\
        &= \exp\!\left(\frac{-6\lambda (1+k_{\textrm{\normalfont max}})}{N}   [r^2]_{r = r_e}^{r = \tau}\!+\!\frac{12\lambda}{N}\sum_{k = 0}^{k_{\textrm{\normalfont max}}}\!\int_{r_e}^{\tau}\!\frac{\zeta_k(r, q)}{\pi}r \,dr\right).
    \end{split}
    \end{align}}To integrate $\zeta_k(r, q)$, we have to focus on the integral
     {\small
    \begin{equation}
        I = \int_{r_e}^{\tau}\!\left(j a \left(r^2+z^2\right)^{-\sfrac{\alpha}{2}}\right)^l \!\left(1-j a \left(r^2+z^2\right)^{-\sfrac{\alpha}{2}}\right)^{-m+\sfrac{1}{2}}r\,dr.
    \end{equation}}
    By writing the integral from $0$ to $x$ $I(x)$ such that $I = [I(x)]_{x = r_e}^{x = \tau}$ and by using the change of variable $u \to (r^2+z^2)^{-\sfrac{\alpha}{2}}(x^2+z^2)^{\sfrac{\alpha}{2}}$, it is obtained that 
    {\smalltonormalsize
    \begin{equation}
        I = \Big[\dfrac{-\left(j a\right)^l }{\alpha}  \int_{0}^{1}\frac{(x^2+z^2)^{1-\sfrac{l\alpha}{2}}u^{l-1-\sfrac{2}{\alpha}}}{\left(1-j \,a\, u (x^2+z^2)^{-\sfrac{\alpha}{2}}\right)^{m-\sfrac{1}{2}}}du\Big]_{x = r_e}^{x = \tau}.
    \end{equation}}
    Using the definition of the hypergeometric function ${}_2F_1(\Tilde{a}, \Tilde{b}; \Tilde{c}; \Tilde{d})$ for $\real{\Tilde{c}}>\real{\Tilde{b}}>0$ and $|\arg(1-\Tilde{d})| < \pi$,
    {\smalltonormalsize
    \begin{equation}
        {}_2F_1(\Tilde{a}, \Tilde{b}; \Tilde{c}; \Tilde{d}) = \frac{\Gamma(\Tilde{c})}{\Gamma(\Tilde{b})\Gamma(\Tilde{c}-\Tilde{b})}\int_0^1 \frac{t^{\Tilde{b}-1}(1-t)^{\Tilde{c}-\Tilde{b}-1}}{(1-t \Tilde{d})^{\Tilde{a}}} dt,
    \end{equation}}
    and after some simplifications, we have
    {\footnotesize
    \begin{equation}\label{eq:zeta_integrated}
        I\! = \!\Bigg[\frac{(x^2+z^2)^{1-(l-1)\frac{\alpha}{2}}{}_2F_1\!\left(1, \scalebox{0.9}{1-\textit{l}+}\frac{2}{\alpha}; \frac{1}{2}\scalebox{0.9}{-\textit{l+m}+}\frac{2}{\alpha}; \frac{-j}{a(x^2+z^2)^{-\frac{\alpha}{2}}}\right)}{ja \left(1-ja(x^2+z^2)^{-\sfrac{\alpha}{2}}\right)^{m-\sfrac{3}{2}}\left(2+\sfrac{\alpha}{2}(2m-2l-1)\right)}\Bigg]_{x = r_e}^{x = \tau}
    \end{equation}}
Using \eqref{eq:zeta_final} and inserting \eqref{eq:zeta_integrated} in \eqref{eq:phiE_proof} gives \eqref{phiE_final}.

\section{Proof of Theorem~\ref{th:M2}}
\label{app:3}
     The first step consists of expanding $\mathcal{M}_{2}(T_e)$ by using Theorem~\ref{th:CDF_cond} and distributing the expectation operator. This gives
 {\small
\begin{multline}\label{eq:M2_first}
    \mathcal{M}_{2}(T_{e}) = \frac{1}{4} - \mathbb E_{\Psi}\left[\frac{1}{\pi}\int_{0}^{\infty} \textrm{\normalfont Im}\left[\phi_{E}(q|\Psi)\,e^{-jqT_{e}}\right]\,q^{-1}\,dq\right]\\
    + \mathbb E_{\Psi}\left[\frac{1}{\pi^2}\left(\int_{0}^{\infty} \textrm{\normalfont Im}\left[\phi_{E}(q|\Psi)\,e^{-jqT_{e}}\right]\,q^{-1}\,dq\right)\right.\\
    \left.\times\left(\int_{0}^{\infty} \textrm{\normalfont Im}\left[\phi_{E}(q'|\Psi)\,e^{-jq'T_{e}}\right]\,q'^{-1}\,dq'\right)\right].
\end{multline}}The second term in \eqref{eq:M2_first} corresponds to $\mathcal{M}_{1}(T_{e})-1/2$. By letting $\Omega(T_e)$ be the third term in \eqref{eq:M2_first}, \eqref{eq:M2} is obtained. 

The second step consists of interchanging the integrals and the expectation operator in the expression of $\Omega(T_e)$, so that it can be written as in \eqref{eq:Omega} with $
    \omega(T_e, q, q') \!= \!\mathbb E_{\Psi}[\textrm{\normalfont Im}[\phi_{E}(q|\Psi)e^{-jqT_{e}}]\textrm{\normalfont Im}[\phi_{E}(q'|\Psi) e^{-jq'T_{e}}]]$.

The third step focuses on the expansion of $\omega$. By using $\textrm{\normalfont Im}[x]\! = \!(x-\Bar{x})/{2}$ and $\textrm{\normalfont Re}[x]\! = \!({x+\Bar{x}})/{2}$ and by writing $v_1 \!=\! \exp(-jqT_{e})$ and $v_2\! = \!\exp(-jq'T_{e})$, it is obtained that $
    \omega(T_e, q, q') 
    = \frac{1}{2}\,\textrm{\normalfont Re}\left[\gamma_+(q, q')\,v_1\,v_2\right]-\frac{1}{2}\,\textrm{\normalfont Re}\left[\gamma_-(q, q')\,v_1\,\Bar{v_2}\right]
$
with $\gamma_+(q, q')\! = \!\mathbb E_{\Psi}\left[\phi_{E}\left(q | \Psi\right)\,\phi_{E}\left(q' | \Psi\right)\right]$ and $\gamma_-(q, q') = \mathbb E_{\Psi}\left[\phi_{E}\left(q | \Psi\right)\,\Bar{\phi_{E}}\left(q' | \Psi\right)\right]$. 
The last step is the resolution of $\gamma_+(q, q')$ and $\gamma_{-}(q, q')$, 
By replacing the CFs by their expression given in \eqref{eq:Phi_E_def}, by distributing the product and by using the probability generating functional, we obtain for $\gamma_+(q, q')$
{\footnotesize
\begin{align}\label{eq:gamma+_proof}
\begin{split}
    &\gamma_+(q, q') = \mathbb E_{\Psi}\left[\prod_{X_i \in \Psi}\phi_{E}(q|X_i)\phi_{E}(q'|X_i)\right]\\
    &= \exp\left(\int_{r_e}^{\tau}\frac{24(1+k_{\textrm{\normalfont max}})\lambda}{N}\left(\frac{3(1+k_{\textrm{\normalfont max}})}{N\pi}-1\right)r\,dr\right.\\
    &+\left.\int_{r_e}^{\tau}\frac{12\lambda}{N}\left(1-\frac{6(1+k_{\textrm{\normalfont max}})}{N\pi}\right) \sum_{k = 0}^{k_{\textrm{\normalfont max}}}\frac{\zeta_k(r, q)+\zeta_k(r, q')}{\pi}r\,dr\right.\\
    &\left.+\int_{r_e}^{\tau}\frac{72\lambda}{N^2\pi}\sum_{k = 0}^{k_{\textrm{\normalfont max}}}\sum_{p = 0}^{k_{\textrm{\normalfont max}}}\frac{\zeta_k(r, q)\zeta_p(r, q')}{\pi^2}r\,dr\right).
\end{split}
\end{align}}
We first interchange the integral and sums in \eqref{eq:gamma+_proof}. The integration of $\zeta_k(r, q)$ has been completed in Appendix~\ref{app:2}. The integration of the product $\zeta_k(r, q)\zeta_p(r, q')$ is only feasible for $z = 0$. Employing the integration method from Appendix~\ref{app:2}, with $a = q \chi_k \Bar{S}_r(r_i)/m$ and $b = q' \chi_k \Bar{S}_r(r_i)/m$, and using the first Appell function, a generalization of the hypergeometric function ${}_2F_1$, defined as
{\footnotesizetosmall
\begin{equation}
    F_1(\Tilde{a}; \Tilde{b}_1, \Tilde{b}_2; \Tilde{c}; \Tilde{z}_1, \Tilde{z}_2) = \frac{\Gamma(\Tilde{c})}{\Gamma(\Tilde{a})\Gamma(\Tilde{c}-\Tilde{a})}\int_0^1 \frac{t^{\Tilde{a}-1}(1-t)^{\Tilde{c}-\Tilde{a}-1}}{(1-t \Tilde{z}_1)^{\Tilde{b}_1}(1-t \Tilde{z}_2)^{\Tilde{b}_2}} dt,
\end{equation}}
we obtain
{\scriptsize
\begin{align}
\begin{split}\label{eq:integration_zeta_zeta}
    &\int_{r_e}^{\tau} \frac{r^{-(l+l') \alpha} r}{\left((1- j a r^{-\alpha})(1-j b r^{-\alpha})\right)^{m-\sfrac{1}{2}}} dr \\
    &\quad= -\left[\frac{r^{2+\alpha(2m-1-l-l')} (-1)^{m+1}}{\left(2+\alpha(2m-1-l-l')\right)(j a)^{m-l-\sfrac{1}{2}}(j b)^{m-l-\sfrac{1}{2}}}\right. \times \\
    &\!\left. F_1\!\left(2m\!-\!1\!-\!l\!-\!l'\!+\!\frac{2}{\alpha}; m\!-\!\frac{1}{2}, m\!-\!\frac{1}{2}; 2m\!-\!\!l\!-\!l'\!+\!\frac{2}{\alpha}; \frac{r^\alpha}{j a}, \frac{r^\alpha}{j b}\!\right)\right]_{r = r_e}^{r = \tau}\!.
\end{split}
\end{align}}
Finally, by inserting \eqref{eq:integration_zeta_zeta} in \eqref{eq:gamma+_proof}, and applying the same reasoning to $\gamma_{-}(q,q')$, we obtain \eqref{eq:gamma}.
}
\bibliographystyle{IEEEtran}
\bibliography{bibli}

\begin{thebibliography}{10}
\providecommand{\url}[1]{#1}
\csname url@samestyle\endcsname
\providecommand{\newblock}{\relax}
\providecommand{\bibinfo}[2]{#2}
\providecommand{\BIBentrySTDinterwordspacing}{\spaceskip=0pt\relax}
\providecommand{\BIBentryALTinterwordstretchfactor}{4}
\providecommand{\BIBentryALTinterwordspacing}{\spaceskip=\fontdimen2\font plus
\BIBentryALTinterwordstretchfactor\fontdimen3\font minus \fontdimen4\font\relax}
\providecommand{\BIBforeignlanguage}[2]{{%
\expandafter\ifx\csname l@#1\endcsname\relax
\typeout{** WARNING: IEEEtran.bst: No hyphenation pattern has been}%
\typeout{** loaded for the language `#1'. Using the pattern for}%
\typeout{** the default language instead.}%
\else
\language=\csname l@#1\endcsname
\fi
#2}}
\providecommand{\BIBdecl}{\relax}
\BIBdecl

\bibitem{rumney19}
M.~Rumney, ``{5G Safety : A scientific response to an increasingly polarized debate},'' \emph{{COST Action CA15104: Fourth and Final Scientific Annual Report (TD(19)11003)}}, 2019.

\bibitem{icnirp2020}
ICNIRP, ``{Guidelines for Limiting Exposure to Electromagnetic Fields ($\numprint[kHz]{100}$ to $\numprint[GHz]{300}$)},'' \emph{Health Physics}, vol. 118, pp. 483--524, 2020.

\bibitem{ANFR2019a}
``{{Evaluation de l'exposition du public aux ondes {\'{e}}lectromagn{\'{e}}tiques 5G} Volet 1 : présentation générale de la 5G},'' ANFR, Tech. Rep., 2019.

\bibitem{Chiaraviglio21}
L.~Chiaraviglio, S.~Rossetti, S.~Saida, S.~Bartoletti, and N.~Blefari-Melazzi, ``{Pencil Beamforming Increases Human Exposure to Electromagnetic Fields: True or False?}'' \emph{IEEE Access}, vol.~9, 2021.

\bibitem{GontierAccess}
Q.~Gontier \emph{et~al.}, ``{A Stochastic Geometry Approach to EMF Exposure Modeling},'' \emph{IEEE Access}, vol.~9, pp. 91\,777--91\,787, 2021.

\bibitem{GontierTWC}
------, ``{Joint Metrics for EMF Exposure and Coverage in Real-World Homogeneous and Inhomogeneous Cellular Networks},'' 2023, {Unpublished, available at \url{https://arxiv.org/abs/2302.03559}}.

\bibitem{direnzo15}
M.~Di~Renzo, ``{Stochastic Geometry Modeling and Analysis of Multi-Tier Millimeter Wave Cellular Networks},'' \emph{IEEE Transactions on Wireless Communications}, vol.~14, no.~9, pp. 5038--5057, 2015.

\bibitem{yu2017}
X.~Yu, J.~Zhang, M.~Haenggi, and K.~B. Letaief, ``{Coverage Analysis for Millimeter Wave Networks: The Impact of Directional Antenna Arrays},'' \emph{IEEE Journal on Selected Areas in Communications}, vol.~35, no.~7, pp. 1498--1512, 2017.

\bibitem{Thornburg15}
A.~Thornburg and R.~W. Heath, ``{Ergodic capacity in mmWave ad hoc network with imperfect beam alignment},'' in \emph{MILCOM 2015 - 2015 IEEE Military Communications Conference}, 2015, pp. 1479--1484.

\bibitem{Khan16}
T.~A. Khan, A.~Alkhateeb, and R.~W. Heath, ``Millimeter wave energy harvesting,'' \emph{IEEE Transactions on Wireless Communications}, vol.~15, no.~9, pp. 6048--6062, 2016.

\bibitem{ECC}
N.~Deng and M.~Haenggi, ``{Energy Correlation Coefficient in Wirelessly Powered Networks with Energy Beamforming},'' in \emph{{ICC 2021 - IEEE International Conference on Communications}}, 2021, pp. 1--6.

\bibitem{Wang21}
M.~Wang, C.~Zhang, X.~Chen, and S.~Tang, ``{Performance Analysis of Millimeter Wave Wireless Power Transfer With Imperfect Beam Alignment},'' \emph{IEEE Transactions on Vehicular Technology}, vol.~70, no.~3, pp. 2605--2618, 2021.

\bibitem{app10238753}
M.~Al~Hajj, S.~Wang, L.~Thanh~Tu, S.~Azzi, and J.~Wiart, ``{A Statistical Estimation of 5G Massive MIMO Networks’ Exposure Using Stochastic Geometry in mmWave Bands},'' \emph{Applied Sciences}, vol.~10, no.~23, 2020.

\bibitem{9511258}
N.~A. Muhammad \emph{et~al.}, ``{Stochastic Geometry Analysis of Electromagnetic Field Exposure in Coexisting Sub-6 GHz and Millimeter Wave Networks},'' \emph{IEEE Access}, vol.~9, pp. 112\,780--112\,791, 2021.

\bibitem{Ganti10}
R.~K. Ganti and J.~G. Andrews, ``{Correlation of link outages in low-mobility spatial wireless networks},'' in \emph{2010 Conference Record of the Forty Fourth Asilomar Conference on Signals, Systems and Computers}, 2010, pp. 312--316.

\bibitem{Haenggi16}
M.~Haenggi, ``{The Meta Distribution of the SIR in Poisson Bipolar and Cellular Networks},'' \emph{IEEE Transactions on Wireless Communications}, vol.~15, no.~4, pp. 2577--2589, 2016.

\bibitem{uplink_meta}
Y.~Wang, M.~Haenggi, and Z.~Tan, ``{The Meta Distribution of the SIR for Cellular Networks With Power Control},'' \emph{IEEE Transactions on Communications}, vol.~66, no.~4, pp. 1745--1757, 2018.

\bibitem{Quan22}
Y.~Quan, M.~Coupechoux, and J.~Kélif, ``{Rate Meta-distribution in mmW D2D Networks with Beam Misalignment},'' in \emph{GLOBECOM 2022 - 2022 IEEE Global Communications Conference}, 2022, pp. 1825--1830.

\bibitem{Yang22}
L.~Yang, F.-C. Zheng, and S.~Jin, ``{SINR Meta Distribution for mmWave Heterogeneous Networks Under Varying Queue Status: A Spatio-Temporal Analysis},'' \emph{IEEE Transactions on Vehicular Technology}, vol.~71, no.~12, pp. 13\,281--13\,298, 2022.

\bibitem{Baccelli1997StochasticGA}
F.~Baccelli, M.~Klein, M.~Lebourges, and S.~A. Zuyev, ``{Stochastic geometry and architecture of communication networks},'' \emph{Telecommunication Systems}, vol.~7, pp. 209--227, 1997.

\end{thebibliography}
\vfill
\end{document}